\newtheorem{theorem}{Theorem}[]
\newtheorem{lemma}{Lemma}[]
\newtheorem{problem}{Problem}[]
\renewcommand{\section}{
	\@startsection
	{section}
	{1}
	{0pt}
	{1.1\baselineskip}
	{0.2\baselineskip}
	{\sc \centering}
}
\renewcommand{\subsection}{
	\@startsection
	{subsection}
	{1}
	{0pt}
	{1.1\baselineskip}
	{0.2\baselineskip}
	{\sc \centering}
}
\renewcommand{\subsubsection}{
	\@startsection
	{subsubsection}
	{1}
	{0pt}
	{1.1\baselineskip}
	{0.2\baselineskip}
	{\sc \centering}
}
\begin{document}
	
\title{\large\sc Does limited liability reduce leveraged risk?: The case of loan portfolio management}
\normalsize
\author{
\sc{Deb Narayan Barik} \thanks{Department of Mathematics, Indian Institute of Technology Guwahati, Guwahati-781039, India, e-mail: d.narayan@iitg.ac.in}
\and 
\sc{Siddhartha P. Chakrabarty} \thanks{Department of Mathematics, Indian Institute of Technology Guwahati, Guwahati-781039, India, e-mail: pratim@iitg.ac.in, Phone: +91-361-2582606}}

\date{}
\maketitle
\begin{abstract}
	
Return-risk models are the two pillars of modern portfolio theory, which are widely used to make decisions in choosing the loan portfolio of a bank. Banks and other financial institutions are subjected to limited liability protection. However, in most of the model formulation, limited liability is not taken into consideration. Accordingly, to address this, we have, in this article, analyzed the effect of including it in the model formulation. We formulate four models, two of them are maximizing the expected return with risk constraint, including and excluding limited-liability, and other two are minimization of risk with threshold level of return with and without limited-liability. Our theoretical results show that the solutions of the models with limited-liability produce better results than the others, in both minimizing risk and maximizing expected return. It has less risky investment than the other portfolio that solves the other model. Finally, an illustrative example is presented to support the theoretical results obtained.

{\it Keywords: Limited Liability; Loan Portfolio; Performance Analysis; Optimization}

\end{abstract}

\section{Introduction}

Bank's leverage to a great extent may end up playing a vital role in triggering a financial crisis, as was the case with the spectacular financial collapse of 2008, which is now widely attributed to excessive leveraging by the prominent investment banks. Given the fact that many of these financial institutions which were over-leveraged, had maintained a healthy level of capital requirement compliance, the regulators, namely the Basel Committee on Banking Supervision (BCBS) were prompted by this, to set a target upper bound on the extent to which a bank can be leveraged \cite{Basel2014}. Accordingly, they introduced a non-risk based capital measure, namely, the Leverage Ratio, to insulate the banks from over-leveraging, by way of capital requirements. Consequently, the Leverage Ratio is defined as, 
\[\text{Leverage Ratio}:=\frac{\text{Capital Measure}}{\text{Exposure Measure}}.\]
The numerator of ``Capital Measure'' in the definition is the Tier 1 capital of the risk based capital framework, whereas the denominator of ``Exposure Measure'' is the sum of the exposures from the on-balance items, off-balance sheet items, derivatives and securities financing transactions (SFT).

The well established approaches for determination of capital requirements, not withstanding, the possibility of higher returns drives banks to greater risk exposure. Further, it is possible that banks will tend to underplay the extent of the risk exposure, to the supervisors, which in the worst case, may even lead to bankruptcy. This problem and its step-by-step solution was discussed in \cite{Blum2008}. The problem of limited ability of supervisors to decipher whether the bank disclosures are honest or not, is shown to have an enforceable solution, by way of imposition of a risk-independent Leverage Ratio restriction. A World Bank report \cite{D2009} has elaborated upon the concept of leverage, and also the necessity of the Leverage Ratio, to complement the already existent capital requirement framework, with the caveat of it (Leverage Ratio) being one of the several policy tools, in the paradigm of assessing the leverage buildup in a financial institution. In this context, Hildebrand \cite{Hildebrand2008} highlighted the benefits of a Leverage Ratio, while recognizing the shortcomings. The benefits include the complementary non-risk based nature and the simplicity, in terms of definition, application and monitoring, while the shortcomings include off-balance sheet exposure, profitability and pro-cyclicality. Dell'Ariccia et al. \cite{Dellariccia2014} in their work related the low interest rate scenario with the leverage, as well as risk-taking by the banks. The authors concluded that when banks are accorded the leeway of determining their capital structure, reduction in interest rates lead to increased leverage, and consequently, higher risk, provided the loan demand function has a diminishing slope. However, in case of fixed capital structure, the impact is contingent on the extent of the leverage. Subsequently, Smith et al. \cite{Acosta2020} worked on the bank's decision problem, contingent on the Leverage Ratio restriction. In the model, the capital holding is considered, with all investment done in one risky asset, and all payments being done from the bank's side. It was shown that every bank has a tendency to hold less capital, which is otherwise very evident. Further, one constant $\widehat{k}$ was derived for each bank, depending upon the assets of the bank, thereby establishing a relation with Leverage Ratio and risk taking. Finally, it was noted that relative to a solely risk-based capital framework, the imposition of the Leverage Ratio requirement leads to lower probabilities of bank failure.

In \cite{Kiema2014} and \cite{Repullo2004}, the authors have categorized the loan types as high risk loans and low risk loans. Unlike the previous works, in this case, instead of considering the analysis of the portfolio of one bank, they have established the stability of the banks, by the expected number of bank failures, among the banks operating in the market. Further in the equilibrium (where the demand of loans and the supply of loans are equal, it is defined as the zero net value of the bank), they determined the relation between the Leverage Ratio and the banks' portfolio, with low risk and high risk loans. Finally, the papers also analyzed that higher Leverage Ratio increases the stability of the bank. As the Leverage Ratio increases, the number of high risk loans decreases in the portfolio of the bank. Sale of bank loans can often be motivated by limitations resulting from regulatory requirements \cite{Carlstrom1995}. In case of unregulated banks, there is enough incentive to extend loans and then sell off these loans to other banks, rather than adopt a more traditional approach of accepting deposits to fund these loans. Banks can extend credit only to the extent that regulators allow for. In this scenario, it is the approach of loan sales that is attractive for such banks (thereby going beyond the permissible regulatory limits) and also for the banks which have room (from capital requirement perspective) for purchasing these loans. In practice, this amounts to capital buffer of non-local banks to support local projects routed through local banks.

The aspects of leverage and the consequent risk is intricately linked to risk-return paradigm of the portfolio of loans held by the bank \cite{Mencia2012}.
Accordingly, the distribution of loan portfolios was studied sector-wise to encapsulate the cyclical characteristics of different types of loans and yields. The classical Markowitz approach is applied with a Value-at-Risk (VaR) constraint (to accommodate regulatory requirements), as well as the relationship maximization of the utility function. Further, the model was also used to capture information about default correlations. In addition, this exhaustive study also included the determination of a no-arbitrage principle driven pricing of loans, taking into account the credit risk associated with the debtor. An optimization approach to the construction of a loan portfolio using VaR and Conditional VaR (CVaR) constraints is presented in \cite{Ming2015}. To this end, the basic approach of Lagrangian algorithm is employed, in order to determine the efficient frontier. An alternative approach of minimization of credit risk, in terms of expected loss is empirically analyzed in \cite{Cho2012}. Accordingly, a methodology based on the Large Deviation Theory (LDT) is used for portfolio optimization, by taking into account, the heterogeneity of risk characteristic across different geographical locations. The key takeaway was the demonstration of a significant improvement in the performance of this approach, vis-a-vis other benchmark portfolios, with this improvement being achieved in terms of enhanced excess return and reduced credit risk. For a detailed description of portfolio management, in presence of default risk, one may refer to the KMV document \cite{Kealhofer1998}. The authors of the work, enumerated the key aspects of this problem, and covered models of default risk, default correlation and value correlation. This is followed by risk contribution and its relation to optimal diversification, as well as economic capital. They concluded with an accurate and detailed description of the risk of losses experienced in loan portfolios, by considering different ``sub-portfolios'', of a typically very large portfolio of banks. 

From a historical perspective, limited liability, in practice is achieved by way of a private contractual setup \cite{Carney1998}. It offers the advantage of safeguard of passive investors, from creditors, in the event of bankruptcy. In today's economy, the implications of limited liability, as a source of moral hazard, is significantly evident \cite{Djelic2013}. The authors of the work \cite{Djelic2013} strongly suggest the structural connect between moral hazard and limited liability, and highlighted the disruptive socio-economic consequences of the same. The question of dynamic moral hazard, resulting from the un-observable effort of an agent with limited liability, in terms of low frequency high magnitude losses is examined in \cite{Biais2010}. The setup consisted of the agent and the principal, who (unlike the agent) has unlimited liability. An optimal approach, in terms of payments to the agent, is contingent on the good performance of the agent, in absence of which the payments are stopped. This is turn leaves room for both the extreme possibilities of the firm size diminishing to zero or experiencing unbounded growth. Limited liability, while acting as an incentive for investors, and facilitating economic growth, can have adverse consequences, such as risk taking tendencies, thereby causing economic loss \cite{Simkovic2018}. Accordingly, the consequences of the latter can be sought to be mitigated through regulatory mechanisms (including capital requirements) and statutory insurance. An analysis of limited liability for insurance markets is studied in \cite{Boonen2019}, by considering limited liability protection for non-life insurers. In particular, the case considered in the study is one where there is an exchangeable nature of insurance risk, in case of policy holders. The author then goes on to establish the existence of a partial equilibrium in the insurance sector.

\section{The Models}

The risk management of a loan portfolio in a bank is an increasingly complex exercise, due to several factors. Banks have been extended the legal rights emanating from the concept of limited liability, which effectively means that the losses cannot exceed the net value of the bank. In the event of the net worth being less than the debt, the owners do not  recover anything, as a result of the bank having gone bankrupt. However, if the net value at the end is more than debt, then the owners are entitled to the remaining assets after the payout have been made to the creditors of the bank. While most of the literature on managing a loan portfolio does the modeling in the regime of the standard risk-return model, in some of the works, the notion of limited liability has been considered for the model setup. While the models due to \cite{Blum2008,Dellariccia2014,Acosta2020,Kiema2014,Repullo2004} mentioned about the limited liability considerations, however they do not extend this discussion to a narrative on the advantages and disadvantages resulting from the usage of limited liability in the model paradigm. A summary of the relevant work in the risk-return framework for the loan portfolio decision problem is presented in the Table \ref{One_Table_One}, which also highlights the novelty and advancement of this work vis-a-vis the existing literature. 

In this article, we show that including limited liability plays a vital role in optimizing the expected return, as well as in reducing the expected and the unexpected loss of the portfolio. Accordingly, we formulate two sets of problems, by the consideration of expected return and risk, which are the two pillars of modern portfolio theory.

\begin{table}[h]
\centering 
\begin{tabular}{ccccc}
\hline
Max of exp return with- & Min of risk with- & Max of exp return with- & Min of risk & Source \\
out lim liability&  out lim liability & lim liability & with lim liability & \\
\hline
No & No  & Yes & No &\cite{Blum2008}, \cite{Dellariccia2014}, \cite{Acosta2020}, \cite{Kiema2014}, \cite{Repullo2004}.\\
\hline
No & Yes & No & No & \cite{Mencia2012},\cite{Ming2015},\cite{Cho2012}.\\
\hline
Yes & No & No & No & \cite{Carlstrom1995}, \cite{Oladejo2020}.\\
\hline
Yes & Yes  & Yes & Yes & This work \\
\hline
\end{tabular}
\caption{Comparative literature highlighting the novelty of the work}
\label{One_Table_One}
\end{table}

\subsection{Models with Risk-Return}

Before going to the modelling setup (for a loan portfolio of size $n$), we introduce the parameters that are going to be used in the risk-return framework,
as given in Table \ref{One_Table_Two}.
\begin{table}[h]
\centering 
\begin{tabular}{cc}
\hline
Symbol & Meaning  \\
\hline
$k_{lev}$& Leverage Ratio.\\
\hline
$K(x)$ & Internal Ratings Based (IRB) Capital Requirement for the portfolio $x$. \\
\hline
$\delta$ & Opportunity cost of the capital (equity). \\
\hline
$X_{x}$ & Realizations for the loan portfolio $x$. \\
\hline
$\rho(x)$ & Risk measure for the portfolio $x$ (EL, UL etc.)\\
\hline
$\mu$ & Lower-bound on Expected Return \\
\hline
$\theta$ & Upper-bound on the risk. \\
\hline
\end{tabular}
\caption{List of symbols used in the model}
\label{One_Table_Two}
\end{table}

\begin{problem}{Maximization of Expected Return without Limited Liability:}\label{One_Prob_One}\\
\[\max_{x,k}\left[E\left[X_{x}\right]-\left(1-k\right)-\delta k\right],\]
subject to,
\[k \geq \max \left(k_{lev},K(x)\right),~\sum\limits_{i=1}^{n} x_{i}=1,~x_{i} \geq 0~\forall~ i=1:n,~\rho(x) \leq \theta.\]
Here $\displaystyle{E(X)=\sum\limits_{i=1}^{n}x_{i}R_{i}}$, where $x_{i}$ and $R_{i}$ are the weights and the expected return of the $i$-th loan in the portfolio. The objective functional (motivated by \cite{Kiema2014}) to be maximized involves the maximization of the expected return minus a function of $k$ (which is the larger of the Leverage Ration and the IRB based capital requirement). In addition, the usual conditions of sum of weights being equal to one, no-short selling being permissible, and an upper bound on the risk, are applicable. Finally, we denoted by $\left(x_{1},k_{1}\right) \in \mathbb{F}_{1}$, the solution of the optimization problem, with $\mathbb{F}_{1}$ being the feasible region for this optimization problem.
\end{problem}
\begin{problem} 
{Minimization of Risk without Limited Liability:}\label{One_Prob_Two}\\
\[\min_{x,k}\rho(x),\]
subject to, 
\[k \geq \max \left\{k_{lev},K(x)\right\},~\sum\limits_{i=1}^{n} x_{i}=1,~ x_{i}\geq 0~\forall~ i=1:n,~E\left[X_{x}\right]-\left(1-k\right)-\delta k \geq \mu.\]
Here, the constraints on $k$, the sum of weights being equal to one and no-short selling being permissible, are akin to that of Problem \ref{One_Prob_One}. Also the expected excess return over $\left((1-k)+\delta k\right)$ is required to be at least a threshold value, denoted by $\mu$. Finally, we denote by $\left(x_{2},k_{2}\right) \in \mathbb{F}_{2}$ be the solution the problem, with $\mathbb{F}_{2}$ being the feasible region for this optimization problem.
\end{problem}

\subsection{Model With Limited Liability}
Before going to the modelling aspects of limited liability, let us discuss some important assumptions.
\begin{enumerate}[(A)]
\item It is observable \cite{LBBW} that banks typically will not invest in loans with the default probability $p \geq 0.2$
(which can also be concluded by analyzing the forms of Unexpected Loss). As reported in \cite{Landesbank2008}, the loans in investment grade and speculative grade have the maximum probability of default being $0.2$. Also the article
\cite{Featherstone2006} notes that for the S\&P Rating with highest default probability, namley the rating of ``D-'', the default probability lies between $0.18$ and $0.20$. 
\item If the bank invests all its money in a single loan, then if the expected return from risky loan, after paying the liabilities and the opportunity cost of the equity, is higher than that of the safe loan, then the bank chooses the risky loan in their portfolio, which mathematically translates to,
\[R_{H}-\left(1-k_{H}^{'}\right)-\delta k_{H}^{'} > R_{L} -(1-k_{L}^{'})-\delta k_{L}^{'}.\]
Here $R_{H}$ and $R_{L}$ are the expected return from high risk and low risk loans, respectively, where $k_{H}^{'}=\max\left\{k_{H},k_{lev}\right\}$ and $ k_{L}^{'}=\max\left\{k_L,k_{lev}\right\}$, are the capital holdings for high risk and low risk loans, respectively. Further, $k_{H}$ and $k_{L}$ are the capital requirements, based on the Internal Ratings Bases (IRB) approach.
\item The loans are uncorrelated among themselves.
\end{enumerate}
Accordingly, we are now in a position to define the following two problems, involving limited liability.
For this purpose we slightly modify the objective functional, motivated by \cite{Acosta2020}, for Problems \ref{One_Prob_Three} and \ref{One_Prob_Four}.
\begin{problem}{Maximization of Profit with Limited Liability:}\label{One_Prob_Three}\\
\[\max_{x,k}E\left[\max\left(\left(X_{x}-(1-k)\right),0\right)\right]-\delta k,\]
subject to
\[k \geq \max \left(k_{lev},K(x)\right),~
\sum\limits_{i=1}^{n} x_{i}=1,~ x_{i} \geq 0~\forall~ i=1:n,~\rho(x) \leq \theta.\]
Let $\left(x_{1L},k_{1L}\right) \in \mathbb{F}_{1L}$ be the solution of the problem, with $\mathbb{F}_{1L}$ being the feasible region for this optimization problem.
\end{problem}

\begin{problem}{Minimization of Risk with Limited Liability:}\label{One_Prob_Four}\\
\[\min_{x,k}\rho(x),\]
subject to 
\[k \geq \max \left\{k_{lev},K(x)\right\},~\sum\limits_{i=1}^{n} x_{i}=1,~ x_{i}\geq 0~\forall~ i=1:n,~E\left[\max\left\{(X_{x}-(1-k)),0\right\}\right]-\delta k \geq \mu.\]
Here $\mu$ is the threshold value of the expected. Let $\left(x_{2L},k_{2L}\right) \in \mathbb{F}_{2L}$ be the solution of the problem, with $\mathbb{F}_{2L}$ being the feasible region for this optimization problem.
\end{problem}

\section{Results}

Here we begin with an important theorem.
\begin{theorem}
\label{One_Theo_One}	
Investing more in risky asset increases Expected Loss and Unexpected Loss.
\end{theorem}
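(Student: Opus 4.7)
The plan is to reduce the claim to a pairwise comparison between a risky loan $H$ and a safe loan $L$ and then differentiate each risk measure in the direction of a weight shift from $L$ to $H$. Concretely, I would parametrize such a shift by a single scalar $\alpha\in[0,1]$ with $x_H=\alpha$ and $x_L=1-\alpha$, and set $EL_i=PD_i\,LGD_i$ and $UL_i=LGD_i\sqrt{PD_i(1-PD_i)}$ at the single-loan level. Under Assumption (A), $PD_H,PD_L\in[0,0.2]\subset[0,\tfrac12]$, so $p\mapsto p(1-p)$ is strictly increasing on that interval; together with $PD_H>PD_L$, this gives the structural inequalities $EL_H>EL_L$ and $UL_H>UL_L$, which are the only single-loan ingredients the rest of the argument needs.

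For the Expected Loss statement, linearity of expectation yields $EL(\alpha)=\alpha\,EL_H+(1-\alpha)\,EL_L$, so $\frac{dEL}{d\alpha}=EL_H-EL_L>0$ and any shift toward $H$ strictly raises $EL$. For the Unexpected Loss statement, I would use Assumption (C) to aggregate variances additively, obtaining $UL(\alpha)^2=\alpha^2\,UL_H^2+(1-\alpha)^2\,UL_L^2$, and then differentiate to get $\frac{d}{d\alpha}UL(\alpha)^2=2\alpha\,UL_H^2-2(1-\alpha)\,UL_L^2$. If the argument is to carry over verbatim to a general $n$-loan portfolio, I would promote this pairwise shift to a generic elementary perturbation: take any loans $i,j$ with $PD_i>PD_j$, transfer an infinitesimal mass $\varepsilon$ from $j$ to $i$, and apply the same two identities with $H,L$ replaced by $i,j$.

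The main obstacle is that the $UL$ derivative above is not sign-definite on all of $[0,1]$: it vanishes at $\alpha^{\star}=UL_L^2/(UL_H^2+UL_L^2)$, so $UL(\alpha)$ can initially decrease due to diversification before eventually increasing. I would reconcile this with the theorem in one of two ways consistent with the rest of the paper. The first is to restrict attention to the branch $\alpha\ge\alpha^{\star}$, which is the economically relevant region in which the risk constraints of Problems \ref{One_Prob_One}--\ref{One_Prob_Four} can bind at an interior optimum where a nontrivial expected-return-versus-risk trade-off actually exists. The second, and cleaner, option is to adopt the regulatory-additive form $UL(x)=\sum_i x_i\,UL_i$, matching the IRB capital-requirement convention already in force throughout the paper; under that definition $UL$ becomes linear in $x$ and the same linearity argument used for $EL$ immediately delivers the monotonicity, since $UL_H>UL_L$.
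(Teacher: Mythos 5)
Your route is genuinely different from, and considerably more careful than, the paper's own proof. The paper simply quotes the KMV formulas $EL_{P}=\sum_{i}x_{i}p_{i}\lambda_{i}$ and $UL_{P}=\sqrt{\sum_{i}\sum_{j}x_{i}x_{j}\rho_{ij}UL_{i}UL_{j}}$ with $UL_{i}=\lambda_{i}\sqrt{p_{i}(1-p_{i})}$ and then asserts that since risky loans have large $\lambda_{i}$ and $p_{i}$, increasing either of these increases both EL and UL; no derivative with respect to the portfolio weights is taken, and the statement proved is really monotonicity in the loan parameters rather than in the allocation. Your EL argument (linearity in the weights plus $EL_{H}>EL_{L}$, the latter secured via Assumption (A) and $p\mapsto p(1-p)$ increasing on $[0,\tfrac12]$) is the rigorous version of what the paper intends, and it is correct.

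The obstacle you flag for UL is real and is not resolved by the paper --- it is silently skipped. With the square-root-of-quadratic-form definition and two uncorrelated loans that both have $UL_{i}>0$, the derivative $\tfrac{d}{d\alpha}UL(\alpha)^{2}=2\alpha UL_{H}^{2}-2(1-\alpha)UL_{L}^{2}$ is negative for $\alpha<\alpha^{\star}=UL_{L}^{2}/(UL_{H}^{2}+UL_{L}^{2})$, so the UL half of the theorem is false as stated for a shift between two risky loans. The paper's numerical example happens to evade this only because its safe loan has $PD=LGD=0$, hence $UL_{L}=0$ and $\alpha^{\star}=0$, making $UL(\alpha)=\alpha\,UL_{H}$ monotone on all of $[0,1]$; but the theorem is stated for general portfolios, and the paper elsewhere (e.g.\ Case 4 of Theorem \ref{One_Theo_Three}) compares portfolios differing in their mix of two risky loans. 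Of your two patches, neither coincides with what the paper does: the additive convention $UL(x)=\sum_{i}x_{i}UL_{i}$ would rescue the statement by the same linearity argument as for EL, but it contradicts the displayed $UL_{P}$ formula the paper itself cites, while the restriction to $\alpha\ge\alpha^{\star}$ is a substantive hypothesis the theorem does not announce. In short, your proposal is a corrected and properly qualified version of the result; the paper's proof establishes the EL claim in spirit but does not actually prove the UL claim without one of your amendments.
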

\begin{proof}
In this proof we use the formula used by  \cite{Kealhofer1998}, for the expected loss, 
\[EL_{P}=\sum\limits_{i}x_{i}p_{i}\lambda_{i},\] 
and for the unexpected loss,
\[UL_{P}=\sqrt{\sum\limits_{i}\sum\limits_{j} x_{i}x_{j} \rho_{ij} UL_{i}UL_{j}},\]
where $\displaystyle{UL_{i}=\lambda_{i}\sqrt{p_{i}(1-p_{i})}}$ and $\lambda_{i}$ and $p_{i}$ are the Loss Given Default and Probability of Default, respectively, for the $i$-th loan. So it is clear that risky loans have large $\lambda$, as well as $p_{i}$, and hence increasing either or both of them, for a risky loan, also increases both EL and UL.
\end{proof}	

Now we show that including limited liability in the model leads to out-performance of the models without limited liability. First we show the minimization of risk model with and without limited liability.
\begin{theorem}
\label{One_Theo_Two}	
In case of minimizing risk, Problem \ref{One_Prob_Four} outperforms Problem \ref{One_Prob_Two}.
\end{theorem}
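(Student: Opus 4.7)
The plan is to compare the two feasible sets $\mathbb{F}_2$ and $\mathbb{F}_{2L}$ directly, noting that both problems minimize the same objective $\rho(x)$ and share every constraint except the return threshold. The heart of the argument is the pointwise inequality $\max\{y, 0\} \geq y$, applied to $y = X_x - (1-k)$. Taking expectations and subtracting $\delta k$ from both sides yields
\[
E\bigl[\max\{X_x - (1-k), 0\}\bigr] - \delta k \;\geq\; E[X_x] - (1-k) - \delta k,
\]
so any $(x,k)$ satisfying the Problem \ref{One_Prob_Two} return constraint automatically satisfies the Problem \ref{One_Prob_Four} return constraint. The remaining constraints (the capital floor $k \geq \max(k_{lev}, K(x))$, the budget equation $\sum_i x_i = 1$, and $x_i \geq 0$) are identical across the two problems, so I obtain the feasibility inclusion $\mathbb{F}_2 \subseteq \mathbb{F}_{2L}$.

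Since minimizing the same objective over a larger feasible set cannot yield a worse optimum, I can conclude
\[
\rho(x_{2L}) \;=\; \min_{(x,k) \in \mathbb{F}_{2L}} \rho(x) \;\leq\; \min_{(x,k) \in \mathbb{F}_2} \rho(x) \;=\; \rho(x_2),
\]
which is the out-performance claim in its weak form. This amounts essentially to a one-line consequence of the feasibility inclusion, and I would present it as such.

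The main obstacle, if strict out-performance is intended, lies in verifying that the gap $E[\max\{X_x-(1-k),0\}] - (E[X_x] - (1-k))$ is strictly positive at $x_2$, which is equivalent to $P(X_x < 1-k) > 0$. Under the standing assumption that every loan in the portfolio carries a positive default probability, this gap is genuinely strict, so Problem \ref{One_Prob_Four}'s return constraint has slack at $x_2$. I would then perturb $x_2$ by transferring a small mass from a riskier loan to a safer one, a move which by Theorem \ref{One_Theo_One} strictly reduces $\rho$ while, for a sufficiently small perturbation, still meeting the limited-liability return threshold as well as the capital-floor and non-negativity constraints. This produces a point in $\mathbb{F}_{2L}$ strictly improving on $\rho(x_2)$, delivering the strict inequality. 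The bulk of the work therefore sits in quantifying this slack and controlling the perturbation; the weak inequality itself is immediate from the set inclusion.
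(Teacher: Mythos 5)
Your argument is exactly the paper's proof: the pointwise inequality $\max\{y,0\}\geq y$ gives the feasibility inclusion $\mathbb{F}_{2}\subseteq\mathbb{F}_{2L}$, and minimizing the common objective over the larger set yields the weak inequality. The paper stops at the weak form, so your additional discussion of strictness (via $P(X_{x}<1-k)>0$ and a perturbation argument) goes beyond what is claimed or proved there, but the core reasoning is identical.
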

\begin{proof}
For a particular portfolio $x$ and a particular $k$, we have,
\[E\left[\max\left\{(X_{x}-(1-k)),0\right\}\right]-\delta k \geq  E\left[X_{x}\right]-(1-k)-\delta k.\]
Therefore $\mathbb{F}_{2L}$ contains more points than $\mathbb{F}_{2}$. In other words, $\mathbb{F}_{2} \subseteq \mathbb{F}_{2L}$. Hence,
\[\min~\left\{\rho(x)|(x,k) \in \mathbb{F}_{2L}\right\} \leq \min \left\{\rho(x)|(x,k) \in \mathbb{F}_{2}\right\}.\]
\end{proof}
Now, in order to prove that Problem \ref{One_Prob_Three} outperforms Problem \ref{One_Prob_One}, we first establish the following Lemmas.
\begin{lemma}
\label{One_Lemma_One}
In the solution of Problem \ref{One_Prob_One}, $k_{1}=k_{1}^{'}$ where $k_{1}^{'}=\max\left(K(x_1),k_{lev}\right)$. 
\end{lemma}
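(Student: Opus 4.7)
The plan is to exploit the linear dependence of the objective on $k$. I would begin by rewriting the objective functional of Problem \ref{One_Prob_One} as $E[X_x] - 1 + (1-\delta)k$, making its linear dependence on $k$ explicit. The crucial structural observation is that $k$ enters only through this objective and through the single lower-bound constraint $k \geq \max(k_{lev}, K(x))$; no other constraint in the problem formulation involves $k$, and in particular there is no upper bound on $k$.

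Next, I would invoke the standard (and implicit) economic assumption that the opportunity cost of equity strictly exceeds the normalized cost of debt, i.e., $\delta > 1$. Under this assumption, the coefficient $(1-\delta)$ of $k$ in the rewritten objective is strictly negative, so, for any fixed feasible portfolio $x$, the objective is a strictly decreasing affine function of $k$. A short contradiction argument then concludes: if the optimum $(x_1, k_1)$ satisfied $k_1 > \max(K(x_1), k_{lev})$, we could pick a sufficiently small $\varepsilon > 0$ such that $(x_1, k_1 - \varepsilon)$ remains in $\mathbb{F}_1$ (the portfolio constraints and the risk constraint $\rho(x) \leq \theta$ are unaffected by $k$), and the objective value would strictly increase, contradicting optimality of $(x_1, k_1)$. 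Hence $k_1 = \max(K(x_1), k_{lev}) = k_1'$.

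The only genuine obstacle is justifying $\delta > 1$; this is however a standard economic assumption (equity is strictly more expensive than debt). If it failed, the problem would either be unbounded in $k$ (when $\delta < 1$) or render $k$ indeterminate among all feasible choices (when $\delta = 1$), in which latter case one could still pick the minimum feasible $k$ without loss of generality, so the conclusion of the lemma holds in any event. I expect the write-up to be essentially one short paragraph once this observation is made explicit.
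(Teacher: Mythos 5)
Your proposal is correct and follows essentially the same route as the paper: the paper likewise observes that the objective is decreasing in $k$ for fixed $x$ and derives a contradiction if $k_{1}>k_{1}^{'}$. You are in fact slightly more careful than the paper, which asserts monotone decrease without noting that it requires $\delta>1$ (an assumption the paper only makes implicitly, e.g.\ via $\delta=1.04$ in the numerical example), and your handling of the degenerate case $\delta=1$ is a harmless extra.
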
 
\begin{proof}
$E\left[X_{x}\right]-(1-k)-\delta k$ is a monotonically decreasing function of $k$ (keeping $x$ fixed). So if $k_{1} > k_{1}^{'}$, then we get, 
\[E\left[X_{x_{1}}\right]-(1-k_{1}^{'})-\delta k_{1}^{'} > E\left[X_{x_1}\right]-(1-k_{1})-\delta k_{1},\]
which is a contradiction, since this objective function attains its maximum value at $(x_{1},k_{1})$. Hence $k_{1} > k_{1}^{'}$ is impossible. But since, $k_{1} \ge k_{1}^{'}$, hence we conclude that $k_{1}=k_{1}^{'}$
\end{proof}
\begin{lemma}
\label{One_Lemma_Two}
$E\left[X_{x_{1}}\right]-(1-k_{1}^{'})-\delta k_{1}^{'} \geq E\left[X_{x_{1L}}\right]-(1-k_{1L}^{'})-\delta k_{1L}^{'}$, where $k_{1L}^{'}=\max\left\{K(x_{1L}),k_{lev}\right\}$. 
\end{lemma}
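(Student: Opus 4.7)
The key observation is that Problems \ref{One_Prob_One} and \ref{One_Prob_Three} share \emph{exactly} the same constraint set (the capital requirement, the simplex condition, the no-short-selling condition, and the risk bound $\rho(x)\leq\theta$); only the objective functionals differ. Consequently $\mathbb{F}_{1}=\mathbb{F}_{1L}$, which means any point constructed from the Problem \ref{One_Prob_Three} solution can legitimately be used as a competitor against $(x_{1},k_{1})$ in Problem \ref{One_Prob_One}. The plan is to exploit this by comparing the Problem \ref{One_Prob_One} objective at two feasible points and then invoking Lemma \ref{One_Lemma_One} on the left-hand side.

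First, I would form the pair $(x_{1L},k_{1L}')$ with $k_{1L}'=\max\{K(x_{1L}),k_{lev}\}$. Since $x_{1L}$ already inherits $\sum_{i}x_{i}=1$, $x_{i}\geq 0$, and $\rho(x_{1L})\leq\theta$ from its membership in $\mathbb{F}_{1L}$, and since $k_{1L}'$ by construction satisfies $k\geq\max(k_{lev},K(x_{1L}))$, the pair $(x_{1L},k_{1L}')$ lies in $\mathbb{F}_{1}$.

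Next, I would invoke the optimality of $(x_{1},k_{1})$ in Problem \ref{One_Prob_One} against this competitor to obtain
\[E[X_{x_{1}}]-(1-k_{1})-\delta k_{1}\;\geq\;E[X_{x_{1L}}]-(1-k_{1L}')-\delta k_{1L}'.\]
Finally, apply Lemma \ref{One_Lemma_One} to replace $k_{1}$ by $k_{1}'$ on the left-hand side, which gives exactly the statement of the lemma.

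I do not anticipate a real obstacle; the argument is a one-line optimality comparison once the equality of feasible regions is recognised. The only point worth checking carefully is that replacing $k_{1L}$ by the tighter $k_{1L}'$ in the competitor does not break feasibility in $\mathbb{F}_{1}$, which is immediate from the definition of $k_{1L}'$. Notably, no monotonicity property of the Problem \ref{One_Prob_Three} objective (in $k$ or otherwise) is needed, because the comparison is carried out entirely within the Problem \ref{One_Prob_One} objective evaluated at two members of $\mathbb{F}_{1}$.
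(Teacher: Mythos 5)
Your argument is correct and is essentially the paper's own proof, just written out in full: the paper's one-line justification (``the LHS is the maximum value of $E[X_x]-(1-k)-\delta k$'') implicitly relies on exactly the two facts you make explicit, namely that $(x_{1L},k_{1L}')$ is a feasible competitor in Problem \ref{One_Prob_One} because the two feasible regions coincide, and that Lemma \ref{One_Lemma_One} lets you identify $k_{1}$ with $k_{1}'$ on the left-hand side. No gap; your version is simply the careful expansion of the paper's terse statement.
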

\begin{proof}
The proof follows from the fact that LHS in the inequality, is the maximum value of the function $E\left[X_{x}\right]-(1-k)-\delta k$.
\end{proof}	
Now let us come to the main result.
\begin{theorem}
\label{One_Theo_Three}
In case of maximizing expected return,
Problem \ref{One_Prob_Three} outperforms Problem \ref{One_Prob_One}
\end{theorem}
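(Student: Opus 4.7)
The plan is to exploit two elementary facts: (i) the feasible regions of Problems \ref{One_Prob_One} and \ref{One_Prob_Three} coincide, since the constraints on $(x,k)$ are literally identical; and (ii) for any integrable random variable $Y$ one has $E[\max(Y,0)] \geq E[Y]$. Combining these mirrors the structure of the proof of Theorem \ref{One_Theo_Two}, but adapted to the maximization setting.

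First, I would observe $\mathbb{F}_1 = \mathbb{F}_{1L}$, so in particular $(x_1,k_1) \in \mathbb{F}_{1L}$. Since $(x_{1L},k_{1L})$ is optimal for Problem \ref{One_Prob_Three}, feasibility of $(x_1,k_1)$ yields
\[E\left[\max\left(X_{x_{1L}}-(1-k_{1L}),\,0\right)\right]-\delta k_{1L}\;\geq\;E\left[\max\left(X_{x_{1}}-(1-k_{1}),\,0\right)\right]-\delta k_{1}.\]
Second, applying the pointwise bound $\max(Y,0)\geq Y$ with $Y=X_{x_1}-(1-k_1)$ and taking expectations gives
\[E\left[\max\left(X_{x_{1}}-(1-k_{1}),\,0\right)\right]-\delta k_{1}\;\geq\;E[X_{x_1}]-(1-k_1)-\delta k_1.\]
By Lemma \ref{One_Lemma_One}, $k_1=k_1'$, so the right-hand side is precisely the optimal value of Problem \ref{One_Prob_One}. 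Chaining the two displays delivers the theorem. Lemma \ref{One_Lemma_Two} is not strictly needed for this direction; it simply reinforces that substituting $(x_{1L},k_{1L}')$ into the Problem \ref{One_Prob_One} objective cannot outperform $(x_1,k_1)$, so any gain visible at the Problem \ref{One_Prob_Three} optimum is genuinely attributable to the $\max(\cdot,0)$ feature of the objective and not to a more clever portfolio composition alone.

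The main obstacle is conceptual rather than technical: pinning down what ``outperforms'' means here, namely dominance of the optimal objective values, and confirming that injecting the Problem \ref{One_Prob_One} optimizer $(x_1,k_1)$ into the Problem \ref{One_Prob_Three} program is a legitimate feasibility move, which is immediate because the constraint sets are defined by the same inequalities on $(x,k)$. No monotonicity-in-$k$ argument, no convexity assumption, and none of the structural assumptions (A)--(C) beyond what already appears in the preceding lemmas needs to be invoked; the result is essentially a two-line consequence of optimality plus $\max(\cdot,0)\geq\mathrm{id}$.
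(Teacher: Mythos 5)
Your chain of inequalities is internally valid, but it proves a different statement from the one the paper is actually after. You establish that the optimal value of the Problem \ref{One_Prob_Three} objective dominates the optimal value of the Problem \ref{One_Prob_One} objective. Since the two problems share the same feasible set and $E\left[\max\left(X_{x}-(1-k),0\right)\right]-\delta k \geq E\left[X_{x}\right]-(1-k)-\delta k$ holds pointwise at \emph{every} feasible $(x,k)$, this comparison of two different payoff functionals at their respective optima is automatic and carries no information about the portfolios themselves; one could ``outperform'' in this sense by adding any nonnegative term to the objective. The paper's intended meaning of ``outperforms'' here --- made explicit in the closing line of its proof, in the abstract, and in the conclusion --- is that the limited-liability optimizer $x_{1L}$ is composed of \emph{less risky loans} than $x_{1}$, i.e.\ the limited-liability formulation steers the bank toward a safer portfolio while meeting its own optimality criterion. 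Your argument says nothing about the composition of $x_{1L}$ versus $x_{1}$.

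The paper gets at that conclusion by the reverse comparison: it evaluates \emph{both} optimizers under the \emph{same} linear Problem \ref{One_Prob_One} objective (Lemma \ref{One_Lemma_Two}, which you set aside as unnecessary, is precisely this step), and then splits into four cases according to which branch of $k^{'}=\max\left\{K(\cdot),k_{lev}\right\}$ is active for each solution. In each case it strips away the capital terms (using linearity of $K(x)=\sum_{i}x_{i}K_{i}$ in the mixed case) to extract $E\left[X_{x_{1}}\right] \geq E\left[X_{x_{1L}}\right]$, and then invokes the risk--return tradeoff (APT) to conclude that $x_{1}$ must hold more risky loans than $x_{1L}$. So the missing idea in your proposal is the entire second half: you need a comparison of the two portfolios under a common yardstick, plus the case analysis on the active capital constraint, to say anything about relative riskiness. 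Your observation that $\mathbb{F}_{1}=\mathbb{F}_{1L}$ and that $\max(Y,0)\geq Y$ is a reasonable warm-up, but on its own it does not deliver the theorem as the paper understands it.
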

\begin{proof}
Using Lemma \ref{One_Lemma_Two} we get, 
\[E\left[X_{x_{1}}\right]-(1-k_{1}^{'})-\delta k_{1}^{'} \geq E\left[X_{x_{1L}}\right] -(1-k_{1L}^{'})-\delta k_{1L}^{'}.\]
\begin{enumerate}
\item[Case 1:] We consider $k_{1}^{'}=k_{1L}^{'}=k_{lev}$, so that the above inequality becomes,
\[E\left[X_{x_{1}}\right] \geq E\left[X_{x_{1L}}\right].\]
Therefore by the Arbitrage Pricing theory (APT), $x_{1}$ contains more risky loans than $x_{1L}$. 
\item[Case-2:] Next we consider $k_{1}=k_{lev}$ and $k_{1L}=K(x_{1L})$. Now, since $k_{1L}=K(x_{1L})=\max\left\{K(x_{1L}),k_{lev}\right\} \geq k_{lev}$, therefore,
\[E\left[X_{x_{1}}\right]-(1-k_{lev})-\delta k_{lev}\geq E\left[X_{x_{1L}}\right]-(1-k_{lev})-\delta k_{lev},\]
as $(x_{1},k_{1})$ is the optimal solution. Hence, this implies that,
\[E\left[X_{x_{1}}\right] \geq E\left[X_{x_{1L}}\right].\]
Therefore by the APT, $x_{1}$ contains more risky loans than $x_{1L}$.
\item[Case-3:] We now consider $k_{1}=K(x_1)$ and $k_{1L}=k_{lev}$. 
Then
\[E\left[X_{x_{1}}\right]-(1-K(x_1))-\delta K(x_1) \geq  E\left[X_{x_{1L}}\right]-(1-k_{lev})-\delta k_{lev}.\] 
Now, 
\[E\left[X_{x_{1}}\right]-(1-k_{lev})-\delta k_{lev} \geq E\left[X_{x_{1}}\right]-(1-K(x_1))-\delta K(x_1),\]
since $k_{lev} \leq K(x_1)$ in this case. Now from the above two inequalities, we get,
\[E\left[X_{x_{1}}\right]-(1-k_{lev})-\delta k_{lev} \geq E\left[X_{x_{1L}}\right]-(1-k_{lev})-\delta k_{lev}.\] 
Therefore,
\[E\left[X_{x_{1}}\right] \geq E\left[X_{x_{1L}}\right].\]
Therefore by the APT, $x_{1}$ contains more risky loans than $x_{1L}$.
\item[Case-4:] We finally consider the case $k_{1}=K(x_1)$ and $k_{1L}^{'}=K(x_{1L})$. Accordingly, we get, 
\[E\left[X_{x_{1}}\right]-(1-K(x_1))-\delta K(x_1) \geq  E\left[X_{x_{1L}}\right]-(1-K(x_{1L})-\delta K(x_{1L}).\]
This implies that,
\[\sum\limits_{i}x_{1i}R_{i}-(1-K(x_1))-\delta K(x_1) \geq \sum\limits_{i} x_{1Li} R_{i}-(1-K(x_{1L}))-\delta K(x_{1L}).\]
Now we know that the function $\displaystyle{K(x)=\sum\limits_{i} x_{i} K_{i}}$, where $K_{i}$ is the capital requirement for the $i$-th loan \cite{Kiema2014}. Applying this in the inequality, we obtain,
\[\sum\limits_{i} (x_{1i}- x_{1Li})(R_{i}-(1-k_{i})-\delta k_{i}) \geq 0.\]
Now since $\displaystyle{\sum\limits_{i} x_{1i}=\sum\limits_{i}x_{1Li}=1}$, therefore some of $(x_{1i}- x_{1Li})$ are negative, while some are positive. As the overall sum is positive, hence the positive terms have more weightage. Hence $x_{1}$ has more risky investment than $x_{1L}$.
\end{enumerate}
Therefore we can see that portfolio $x_{1L}$ contains less riskier loans than portfolio $x_{1}$, thereby producing much profit upon success of the bank.
\end{proof}

\section{An Example}

Financial institutions classify the investment instruments (loans, in this case), contingent on its quality, particularly its creditworthiness. The ranking of the loan seekers is executed through various grades, with the prominent ones being the ratings of S\&P and Moody's. In the context of our discussion, we construct an example to illustrate our theoretical results, presented in the preceding Section. As a model built-up, we consider a basic scenario of two kinds of loans, namely, a safe loan and a risky loan. From the perspective of credit risk management, some of the factors which play a pivotal role in characterizing the loan are Probability of Default (PD), Exposure at Default (EAD), Loss Given Default (LGD), returns and the statutory capital requirements. The values of some of these parameters are available in articles \cite{Kiema2014,Shi2016} and also publicly available on the websites of banks (\cite{LBBW} and \cite{SBI}, for instance). Finally, motivated from \cite{Kiema2014}, we take the value of $\delta=1.04$.

For the illustrative example, we have constructed a loan portfolio of three loans, namely, one which is completely safe while the other two are risky, with one of these two being riskier than the other owing to greater the probability of default and expected loss. The value of all the parameters are enumerated in Table \ref{One_Table_Three}.
\begin{table}[h]
\centering 
\begin{tabular}{|c|c|c|c|}
\hline
Loan Type & Return & PD & LGD \\
\hline
Safe Loan & $r_{rf}=3\%$ & $0$ & $0$ \\
\hline
Less Risky Loan & $r_{s}=9\%$ & $p_{s}=6.1\%$ & $lgd_{s}=10\%$ \\
\hline
More Risky Loan & $r_{r}=13.2\%$ & $p_{r}=12.2\%$ & $lgd_r=9\%$ \\
\hline
\end{tabular}
\caption{Risk parameters for the  three loans}
\label{One_Table_Three}
\end{table}
Expected Loss plays the role of risk measure in our model. The formula \cite{chapter11} for the capital requirement for loans as a function of probability of default and loss given default, is given by,
\[C(PD,LGD)=LGD \times \left(Z-PD\right),\] 
where $Z$ is obtained as, 
\[Z=\left(\phi\left[\frac{\phi^{-1}(PD)+\sqrt{\rho}\phi^{-1}(0.999)}{\sqrt{1-\rho}}\right]\right).\]
Here $\phi$ is the cumulative standard normal distribution, while $\rho$ is different for different types of loans (the details of which are available in \cite{chapter11}). For this work, we have taken $\rho=0.15$. We solve the minimization of risk model by using ``scipy'' package of Python. Then, the model becomes,
\[\min_{x,k}~p_{s} \times lgd_{s} \times x_{1}+p_{r} \times lgd_{r}\times x_{2},\]
subject to:
\[x_{0}+x_{1}+x_{2}=1,~k\geq k_{lev},\]
\[k\geq \max\{C(p_{s}, lgd_{s}),k_{lev}\} \times x_{1}+\max\{C(p_{r}, lgd_{r}),k_{lev}\} \times x_{2}\]
and
\[\max(R_{4},0)(1-p_{s})(1-p_{r})+\max(R_{5},0)(1-p_{s})p_{r}+\max(R_{6},0)p_{s}(1-p_{r})+\max(R_{7},0)p_{s}p_{r}-\delta k \geq 0.098.\]
Here,
\begin{eqnarray*}
R_{4}(x_{0},x_{1},x_{2},k)&=&r_{rf}x_{0}+(1+r_{s})x_{1}+(1+r_{r})x_{2}-(1-k),\\
R_{5}(x_{0},x_{1},x_{2},k)&=&r_{rf}x_{0}+(1+r_{s})x_{1}+(1-lgd_{r})x_{2}-(1-k),\\
R_{6}(x_{0},x_{1},x_{2},k)&=&r_{rf}x_{0}+(1-lgd_{s})x_{1}+(1+r_{r})x_{2}-(1-k),\\
R_{7}(x_{0},x_{1},x_{2},k)&=&r_{rf}x_{0}+(1-lgd_{s})x_1+(1-lgd_{r})x_{2}-(1-k).
\end{eqnarray*}
Solving this, we get the loan portfolio allocation of $\left(5.72\%,13.37\%,80.91\%\right)$ with Leverage Ratio of $4\%$. On the other hand, solving for the model of minimizing risk, without limited liability, we get loan portfolio allocation of $\left(2.43\%,13.18\%,84.39\%\right)$, again with Leverage Ratio $4\%$. So for this particular example, there is a $3.68\%$ decrease in the expected loss due, resulting from the inclusion of limited liability.

For the next case, we illustrate the case of maximizing return with, as well as without limited liability. Accordingly, for this case, the model with limited liability will be transformed into a smooth optimization problem. We have given an upper bound of $1.2\%$ in Expected Loss for the problem. We have taken the Leverage Ratio to be $4\%$. Solving this, we get $12.35\%$ decrease in Expected Loss for changing this objective function in this example.

Now we come to the problem of maximizing return with and without limited liability. Then Problem \ref{One_Prob_One} is easily solved, since all the functions in its objective functional and the constraints, are smooth. However since the objective functional contains the ``$\max$'', therefore it is not differentiable. Accordingly, the steps of our methodology are as follows:
\begin{enumerate}[(1)]
\item Model Formulation: The model with limited liability (henceforth called as Model 1-L), which is not differentiable, is given by:
\[\max_{x,k}~E\left(\max(X_{x}-(1-k)),0\right)-\delta \times k,\]
subject to,
\[k \geq \max\{k_{lev},K(x)\},~\sum\limits_{i=0}^{2}x_{i}=1,~\text{Expected Loss} \leq 0.012~\text{and}~x_{i} \geq 0,~\forall~i = 0,1,2.\]
Therefore, we transform this model into another problem which is differentiable. To do this we include some new variables, namely $x_{4}$, $x_{5}$, $x_{6}$ and $x_{7}$, in order to handle the non-differentiability part. Accordingly, the new model (henceforth called as Model 1-L-NM) is given by,	
\[\max_{x,k}\left[x_{4}(1-p_{s})(1-p_{r})+x_{5}(1-p_{s})p_{r}+x_{6}p_{s}(1-p_{r})+x_{7}p_{s}p_{r}-1.04\times k\right],\]
subject to,
\[k \geq \max\{k_{lev},K(x)\},~\sum\limits_{i=0}^{2}x_{i}=1,~\text{Expected Loss} \leq 0.012~\text{and}~x_{i} \geq 0,~\forall~i = 0,1,2,\]
as well as,
\[x_{4}(x_{4}-R_{4})=0,~x_{5}(x_{5}-R_{5})=0,~x_{6}(x_{6}-R_{6})=0~\text{and}~x_{7}(x_{7}-R_{7})=0.\]
\item Transformation: The supremum of the objective functionals for the above problem has the same value. Let $obj_{1L}$ and $obj_{1LNM}$ be the objective functional of the above models. Therefore the objective functional of the first problem is given by,
\begin{eqnarray*}
obj_{1L}&=&\max\left(R_{4},0\right)\left(1-p_{s}\right)\left(1-p_{r}\right)+\max\left(R_{5},0\right)\left(1-p_{s}\right)p_{r}\\
&+&\max\left(R_{6},0\right)p_{s}\left(1-p_{r}\right)+\max\left(R_{7},0\right)p_{s}p_{r}-1.04 \times k.
\end{eqnarray*}
Let $X_{1L}=\left(x_{0},x_{1},x_{2},k\right)$ be the solution of Model 1-L. It also belongs to the feasible set of the Model 1-L-NM, with suitable values of $x_{4}$, $x_{5}$, $x_{6}$ and $x_{7}$.\\ Let $obj_{1LNM}$ attain its maximum at $X_{1LNM}=\left(x_{0}^{'}, x_{1}^{'}, x_{2}^{'}, k^{'}, x_{4}^{'}, x_{5}^{'}, x_{6}^{'}, x_{7}^{'}\right)$. Now, 
\begin{equation}
\label{One_Eq_Increasing}
\frac{\partial\left(obj_{1LNM}\right)}{\partial x_{i}} \geq 0,~ \forall~i=4,5,6,7,
\end{equation}
and the constraint involving $x_{i},~i=4,5,6,7$ does not involve $x_{j},~ j=4,5,6,7,~ \text{excluding}~i$. So any change in $x_{i}$ does not affect $x_{j}$. From equation \eqref{One_Eq_Increasing}, we see that $obj_{1LNM}$ increases with $x_{i},~i=4,5,6,7$. Hence $x_{i}$ takes the largest value in the region allowed by the constraint. Therefore, $x_{i}=\max(R_{i}(x_{0},x_{1},x_{2},k),0), ~i=4,5,6,7$. Consequently, we get ,
\[obj_{1LNM}(X_{1LNM})=obj_{1L}\left(x_{0}^{'}, x_{1}^{'}, x_{2}^{'}, k^{'}\right).\] 
As the restrictions on $\left(x_{0},x_{1},x_{2},k\right)$ are the same in both the problems, hence we obtain,
\begin{equation}
\label{One_L_Less_Than}
obj_{1L}(x_{0}^{'},x_{1}^{'},x_{2}^{'},k^{'}) \leq obj_{1L}(x_{0},x_{1},x_{2},k),
\end{equation}
since $\left(x_{0},x_{1},x_{2},k\right)$ maximizes the $obj_{1L}$.
Let us calculate $R_{i}(X_{1L}),~i=4,5,6,7$ and \\ $x_{i}=\max\left(R_{i}(x_{0},x_{1},x_{2},k),0\right),~i=4,5,6,7$. Further, we denote $X_{1L}^{'}=\left(x_{0},x_{1},x_{2},k,x_{4},x_{5},x_{6},x_{7}\right)$. As $X_{1LNM}$ maximizes the $obj_{1LNM}$, hence we get,
\begin{equation*} 
obj_{1LNM}(X_{1L}^{'}) \leq obj_{1LNM}(X_{1LNM}),
\end{equation*}
and,
\begin{equation}
\label{One_L_Greater_Than}
obj_{1L}(x_{0},x_{1},x_{2},k) \leq obj_{1L}(x_{0}^{'},x_{1}^{'},x_{2}^{'},k^{'}),
\end{equation} 
as, \[obj_{1LNM}\left(X_{1L}^{'}\right)=bj_{1L}\left(x_{0},x_{1},x_{2},k\right)~\text{and}~obj_{1LNM}(X_{1LNM})=obj_{1L}\left(x_{0}^{'},x_{1}^{'},x_{2}^{'},k^{'}\right).\] 
Therefore, from equations \eqref{One_L_Less_Than} and \eqref{One_L_Greater_Than}, we get that both the problems have the same supremum. If Model 1-L has a unique solution, then we get, $\left(x_{0},x_{1},x_{2},k\right)=\left(x_{0}^{'},x_{1}^{'},x_{2}^{'},k^{'}\right)$.
\end{enumerate}

In the example, we have taken three loan portfolios, one of which is completely risk-free. One of the remaining two is less risky (say $L_{s}$) and the remaining one is more risky (say $L_{r}$). In Figure \ref{fig:One_EL}, we have shown the change in Expected Loss against the change of investment in risky loans, where the $x$- axis presents the investment in $L_{s}$ and $y$ axis represents the investment in $L_{r}$. Also $\left(1-x-y\right)$ represents the investments in a completely safe loan. Therefore $\{(x+y \leq 1,~x \geq 0,~y \geq 0)\}$ contains all possible portfolios. 
From the image we see that increasing investments in risky loans increases the Expected Loss. Full investment in the completely safe loan gives an Expected Loss of $0\%$. In the case of full investment in $L_{s}$ and $L_{r}$, the resulting Expected Losses are $0.61\%$ and $1.10\%$, respectively.
\begin{figure}[h]
\centering
\includegraphics[width=0.6\linewidth]{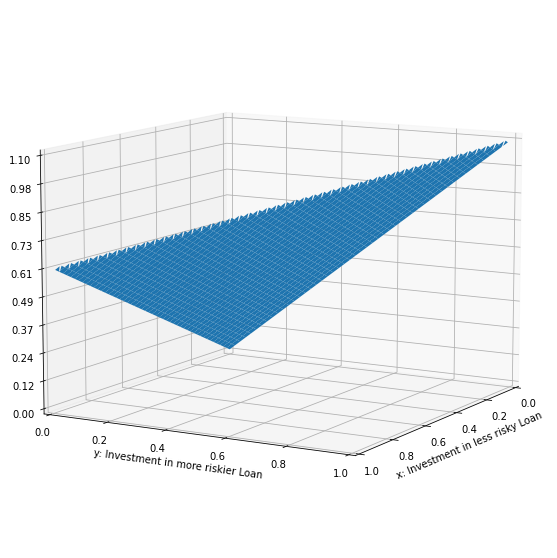}
\caption{Expected Loss against investments in risky loans}
\label{fig:One_EL}
\end{figure}
Now, we discuss the change in the expected loss. The sensitivity of unexpected loss is given in Figure \ref{fig:One_UL} with the investments in risky loans. Full investment in $L_{s}$ and $L_{r}$, results in Unexpected Loss of $2.39\%$ and $2.95\%$, respectively.
\begin{figure}[h]
\centering
\includegraphics[width=0.6\linewidth]{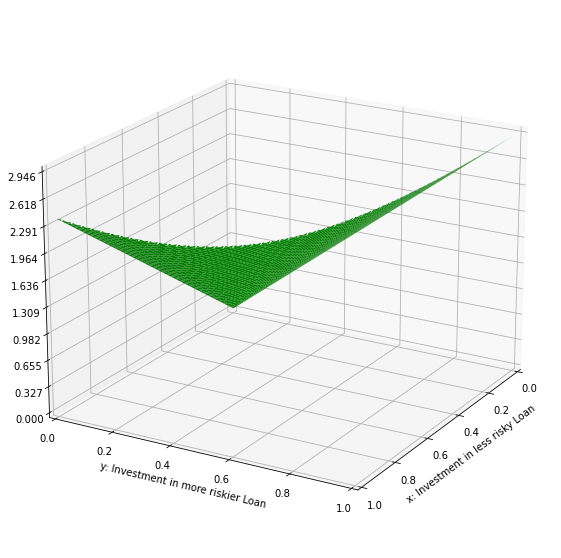}
\caption{Unexpected Loss against investments in risky loans}
\label{fig:One_UL}
\end{figure}
Next, we show the change of return with and without limited liability. We have shown the change of returns by fixing $x$ (investment in $L_{s}$) fixed at $0\%$ and $5\%$ in two images and then another two keeping $y$ (investment in $L_r$) fixed at $0\%$ and $10\%$. These four cases are done by keeping Leverage Ratio ($k$) at $4\%$ (Figure \ref{fig:One_LR4}), $7\%$ (Figure \ref{fig:One_LR7}) and $10\%$ (Figure \ref{fig:One_LR10}).
\begin{figure}
\begin{subfigure}{.5\textwidth}
\centering
\includegraphics[width=.8\linewidth]{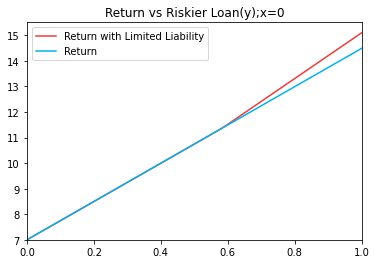}  
\label{fig:sub-first}
\end{subfigure}
\begin{subfigure}{.5\textwidth}
\centering
\includegraphics[width=.8\linewidth]{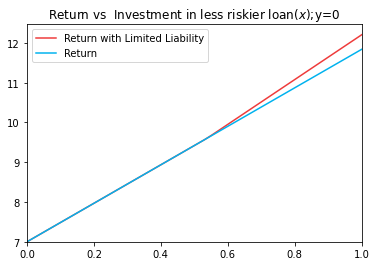}  
\label{fig:sub-second}
\end{subfigure}
\begin{subfigure}{.5\textwidth}
\centering
\includegraphics[width=.8\linewidth]{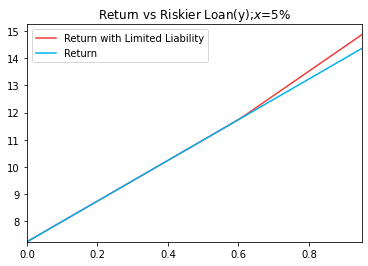}  
\label{fig:sub-third}
\end{subfigure}
\begin{subfigure}{.5\textwidth}
\centering
\includegraphics[width=.8\linewidth]{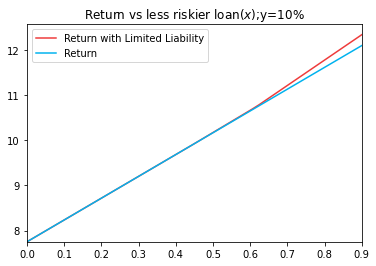}  
\label{fig:sub-fourth}
\end{subfigure}
\caption{Expected Returns versus risky loans keeping the Leverage Ratio at 4\%}
\label{fig:One_LR4}
\end{figure}

\begin{figure}
\begin{subfigure}{.5\textwidth}
\centering
\includegraphics[width=.8\linewidth]{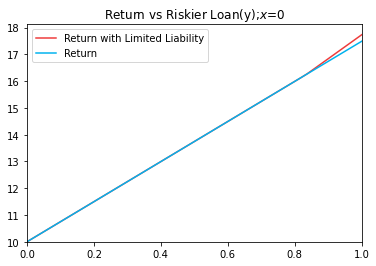}  
\label{fig:sub-first}
\end{subfigure}
\begin{subfigure}{.5\textwidth}
\centering
\includegraphics[width=.8\linewidth]{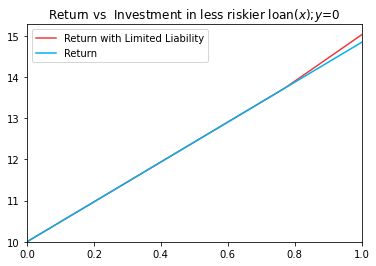}  
\label{fig:sub-second}
\end{subfigure}
\begin{subfigure}{.5\textwidth}
\centering
\includegraphics[width=.8\linewidth]{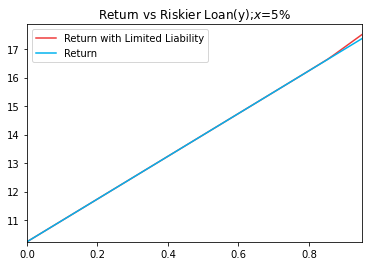}  
\label{fig:sub-third}
\end{subfigure}
\begin{subfigure}{.5\textwidth}
\centering
\includegraphics[width=.8\linewidth]{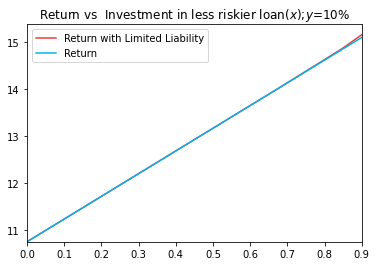}  
\label{fig:sub-fourth}
\end{subfigure}
\caption{Expected Returns versus risky loans keeping the Leverage Ratio at 7\%}
\label{fig:One_LR7}
\end{figure}

\begin{figure}
\begin{subfigure}{.5\textwidth}
\centering
\includegraphics[width=.8\linewidth]{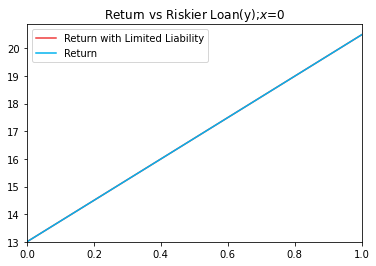}
\label{fig:sub-first}
\end{subfigure}
\begin{subfigure}{.5\textwidth}
\centering
\includegraphics[width=.8\linewidth]{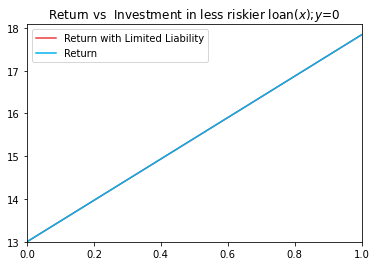}  
\label{fig:sub-second}
\end{subfigure}
\begin{subfigure}{.5\textwidth}
\centering
\includegraphics[width=.8\linewidth]{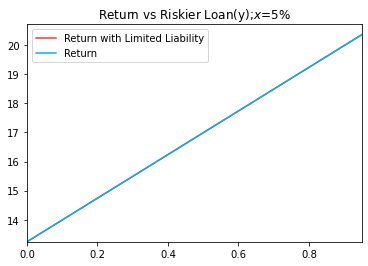}  
\label{fig:sub-third}
\end{subfigure}
\begin{subfigure}{.5\textwidth}
\centering
\includegraphics[width=.8\linewidth]{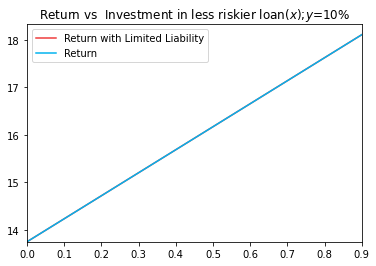}  
\label{fig:sub-fourth}
\end{subfigure}
\caption{Expected Returns versus risky loans keeping the Leverage Ratio at 10\%}
\label{fig:One_LR10}
\end{figure}

Now there are four scenarios of realizations (denoted by $s_{i},i=1:4$), for the two risky loans, namely, both the loans are repaid ($s_{1}$), only $L_{r}$ defaults ($s_{2}$), only $L_{s}$ defaults ($s_{3}$) and both the loans default ($s_{4}$). In Figure \ref{fig:One_LR4}, we have plotted the returns profile when the Leverage Ratio is $4\%$. In this case, let us assume that the bank has invested total wealth of $L_{r}$ ($x=0$). So, if $L_{r}$ defaults, the value of bank becomes $(1-lgd_{s})-(1-k)$, which is negative. Therefore the bank fails to meet its liabilities. In Figure \ref{fig:One_LR7}, we have plotted the returns profile when the Leverage Ratio is $7\%$. In this case, let us assume that the bank has invested total wealth to $L_{r}$. So, if $L_{r}$ defaults, the value of bank becomes $(1-lgd_{s})-(1-k)$, which is negative. Therefore, the bank again fails to meet the liabilities. The same consequences would be observed if the bank invests all its money to $L_{s}$ ($y=0$). When the Leverage Ratio is $7\%$, the loss in the worst case is less than that for the case in which the Leverage Ratio is $4\%$. If there is a penalty associated with the default and it is proportional to the amount of loss, then keeping a $7\%$ Leverage Ratio causes less penalty in case of failure. In Figure \ref{fig:One_LR10}, we have shown the change of return with keeping the Leverage Ratio at $10\%$. It is an interesting case, in the sense that even with the entire investments in $L_{s}$ or $L_{r}$, the bank can survive the worst cases, due to the Leverage Ratio being $10\%$. We can see that banks can survive all worst cases for the possible portfolios for this parameter value of the loans. From the graph, we see that there is no gap between the two lines. Therefore, an increment in the Leverage Ratio increases bank stability and consequently decreases in terms of the gap between the two lines.

\section{Conclusion}

Banks and other significant financial institutions have limited liability protection. However, in the literature discussed, the decision process usually goes through the classical framework, whereby the profit is measured via the Expected Return or utility functions of the realizations, in conjunction with the risk profiles. However, typically, the notion of limited liability is not considered in the model which is used to make the decision. To this end, this article is focused on the approach of incorporating limited liability in the model setup. We theoretically establish the benefit of limited liability for both the objectives of profit maximization and risk minimization. 

We have shown the comparative analysis between the models, for maximization of Expected Return, without limited liability (Problem \ref{One_Prob_One}) and with limited liability (Problem \ref{One_Prob_Three}), with a same upper bound on risk for both. Further, (Problem \ref{One_Prob_Two}) and (Problem \ref{One_Prob_Four}) minimizes risk without and with limited liability, respectively, while keeping identical lower bound for the Expected Return, for both.

Moreover, the analysis shows that in the case of the first comparison, the model with limited liability protection has less risky loans than the model without limited liability, which supports the result obtained in Theorem \ref{One_Theo_Three}. In the case of the second comparison, risk can be more minimized using the limited liability as the lower bound on the return with limited liability, as predicted in Theorem \ref{One_Theo_Two}. 

Therefore, from a practitioner's point of view, using limited liability in the decision model can help make the right investment decision, as limited liability protection reduces the amount of risk for the banks. So the construction of the portfolio, which meets the criteria of the investments ( threshold level of profit or risks ) and incorporates limited liability, is undoubtedly a more valuable and realistic model for the actual scenario.

\bibliographystyle{elsarticle-num}

\bibliography{BIBLIO}

\end{document}